\providecommand{\U}[1]{\protect\rule{.1in}{.1in}}
\newtheorem{theorem}{Theorem}
\newtheorem{definition}[theorem]{Definition}
\newtheorem{example}[theorem]{Example}
\newtheorem{lemma}[theorem]{Lemma}
\newenvironment{proof}[1][Proof]{\noindent\textbf{#1.} }{\ \rule{0.5em}{0.5em}}
\begin{document}

\title{Estimates of non-optimality of quantum measurements and a simple iterative
method for computing optimal measurements}
\author{Jon Tyson\thanks{jonetyson@X.Y.Z$,$ where X=post Y=harvard, and Z=edu}\\Harvard University}
\date{Feb 14, 2008}
\maketitle

\begin{abstract}
\noindent We construct crude estimates for non-optimality of quantum
measurements in terms of their violation of Holevo's simplified minimum-error
optimality conditions. As an application, we show that a modification of
Barnett and Croke's proof of the optimality conditions yields a convergent
iterative scheme for computing optimal measurements.

\end{abstract}

\newpage

\section{Introduction}

The \textit{minimum-error quantum detection problem}\ arose in the 1960's in
the design of optical detectors $\cite{Helstrom Quantum Detection and
Estimation Theory}$ and has been of recent importance in the subjects of
quantum information $\cite{pure state HSW theorem, mixed state HSW theorem,
Holevo mixed state HSW theorem, Koenig Renner Schaffner operational meaning of
min and max entropy}$ and quantum computation $\cite{Ip Shor's algorithm is
optimal, Bacon, Childs from optimal to efficient algo, optimal alg for hidden
shift, moore and russels distinguishing, Bacon new hidden subgroup}$:

\begin{quote}
If an unknown state $\rho_{k}$ is randomly chosen from a known ensemble of
quantum states, what is the chance that the value of $k$ will be discovered by
an optimal measurement?
\end{quote}

Barnett and Croke \cite{Barnett Croke on the conditions for discrimination
between quantum states with minimum error} have recently provided a simple
operator-theoretic proof of the necessity of the standard Yuen-Kennedy-Lax \&
Holevo (YKLH) optimality conditions \cite{Yuen Ken Lax Optimum testing of
multiple, Holevo optimal measurement conditions 1} for the minimum-error
quantum detection problem. Their proof may be shortened, since Holevo
\cite{Holevo remarks on optimal measurements} had previously shown that an
intermediate step of their proof (positivity of the operators $\hat{G}_{j}$
defined by equation $\left(  10\right)  $ of \cite{Barnett Croke on the
conditions for discrimination between quantum states with minimum error})
provides a simplified necessary and sufficient condition for minimum-error
quantum detection.

\subsection{Results}

This note gives a more robust version of Holevo's simplified optimality
condition (condition II of Theorem
\ref{theorem simplified optimality condition}, below), by estimating
non-optimality in terms of quantitative violation of this condition. These
bounds are used to show that the perturbative method of Barnett and Croke may
be converted into a convergent iterative algorithm for computing optimal
measurements, adding to the list $\cite{Helstrom Bayes cost reduction, Jezek
Rehacek and Fiurasek Finding optimal strategies for minimum error quantum
state discrimination, Hradil et al Maximum Likelihood methods in quantum
mechanics, eldar short and sweet optimal measurements}$ of algorithms for this
purpose. This iteration converges even for countably-infinite ensembles in an
infinite-dimensional Hilbert space.

\section{Conditions for minimum-error quantum discrimination}

A precise description of the minimum-error quantum measurement problem is
given by:

\begin{definition}
\label{main BC definition}Let $\mathcal{E}=\left\{  \rho_{k}\right\}  _{k\in
K}$ be an ensemble of mixed quantum states $\rho_{k}$, which are represented
as positive semidefinite operators on a Hilbert space $\mathcal{H}$ normalized
by a-priori probability: $\operatorname*{Tr}\rho_{k}=p_{k}$ with $%
%TCIMACRO{\dsum }%
%BeginExpansion
{\displaystyle\sum}
%EndExpansion
p_{k}=1$. The \textbf{support} $\operatorname{supp}\left(  \mathcal{E}\right)
$ is the closure of the span of the ranges of the $\rho_{k}$. A
\textbf{positive operator-valued measurement} \textbf{(POVM) }is a collection
of positive semidefinite operators $\left\{  M_{k}\right\}  $ satisfying $%
%TCIMACRO{\dsum }%
%BeginExpansion
{\displaystyle\sum}
%EndExpansion
M_{k}=%
%TCIMACRO{\TeXButton{1}{\openone}}%
%BeginExpansion
\openone
%EndExpansion
$. The corresponding \textbf{Lagrange operator} is given by%
\begin{equation}
L=%
%TCIMACRO{\dsum }%
%BeginExpansion
{\displaystyle\sum}
%EndExpansion
M_{k}\rho_{k}\text{.\label{def of L}}%
\end{equation}
The \textbf{minimum-error quantum discrimination problem} \cite{Helstrom
Quantum Detection and Estimation Theory} consists of finding a POVM maximizing
the \textbf{success probability}%
\begin{equation}
P_{\text{succ}}\left(  \left\{  M_{k}\right\}  \right)  =\operatorname*{Tr}%
%TCIMACRO{\dsum _{k}}%
%BeginExpansion
{\displaystyle\sum_{k}}
%EndExpansion
M_{k}\rho_{k}=\operatorname*{Tr}L\label{sucess prob}%
\end{equation}
of correctly distinguishing an element blindly drawn from the ensemble
$\mathcal{E}$. (We will often abuse notation by writing $P_{\text{succ}%
}\left(  M_{k}\right)  $ instead of $P_{\text{succ}}\left(  \left\{
M_{k}\right\}  \right)  $.)
\end{definition}

Holevo's simplified optimality conditions are given by property II
of\footnote{Another interesting optimality condition is given by Theorem 3 on
page 39 of \cite{Belavkin Book}.} 

\begin{theorem}
[Holevo \cite{Holevo remarks on optimal measurements}, Yuen-Kennedy-Lax
\cite{Yuen Ken Lax Optimum testing of multiple}, ]Let $\left\{  M_{k}\right\}
_{k=1,...,m}$ be a POVM for distinguishing the ensemble $\mathcal{E}$. Then
the following are equivalent:\label{theorem simplified optimality condition}

\begin{enumerate}
\item[I.] \label{optimality stated}$\left\{  M_{k}\right\}  $ maximizes
$P_{\text{succ}}$.

\item[II.] \label{condition Yuen Holevo redundant}$\left(  L+L^{^{\dag}%
}\right)  /2\geq\rho_{k}$ for all $k$.\footnote{Earlier formulations of
condition II \cite{Yuen Ken Lax Optimum testing of multiple, Holevo optimal
measurement conditions 1} were that $L=L^{^{\dag}}$ and $L\geq\rho_{k}$ for
all $k,$ equivalently stated as $L=L^{^{\dag}}$ and $\operatorname{Re}\left(
L\right)  \geq\rho_{k}$. (The self-adjointess condition is redundant in the
latter form.)}

\item[III.] \label{condition bounded above by G}There exists a self-adjoint
operator $G$ satisfying $G\geq\rho_{k}$ and $\left(  G-\rho_{k}\right)
M_{k}=0$ for all $k$.
\end{enumerate}

\noindent Furthermore, under these equivalent conditions $L=L^{^{\dag}}=G$,
and $L$ is the unique self-adjoint operator of minimal trace satisfying
$L\geq\rho_{k}$ for all $k$.
\end{theorem}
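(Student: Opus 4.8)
I would prove the cycle III $\Rightarrow$ II $\Rightarrow$ I $\Rightarrow$ III and then establish the minimality assertion separately. The first two links are short. For III $\Rightarrow$ II: summing the complementary-slackness relations $(G-\rho_k)M_k=0$ over $k$ gives $G=G\sum_kM_k=\sum_k\rho_kM_k=L^{\dag}$, and since $G=G^{\dag}$ this forces $L=L^{\dag}=G$; hence $(L+L^{\dag})/2=L=G\geq\rho_k$, which is II. (This already proves the ``furthermore'' identity $L=L^{\dag}=G$.) For II $\Rightarrow$ I I would use the one basic estimate behind everything: if $X=X^{\dag}$ satisfies $X\geq\rho_k$ for every $k$, then for an arbitrary POVM $\{N_k\}$,
\[
P_{\text{succ}}(\{N_k\})=\operatorname{Tr}\sum_k N_k\rho_k\leq\operatorname{Tr}\sum_k N_k X=\operatorname{Tr}X ,
\]
since $\operatorname{Tr}\big(N_k(X-\rho_k)\big)\geq 0$ for a product of two positive operators and $\sum_kN_k=\openone$; taking $X=(L+L^{\dag})/2$ and noting that $\operatorname{Tr}\big((L+L^{\dag})/2\big)=\operatorname{Tr}L=P_{\text{succ}}(\{M_k\})$ (the trace of $L$ is real because $\operatorname{Tr}(M_k\rho_k)=\operatorname{Tr}(\rho_k^{1/2}M_k\rho_k^{1/2})\geq0$) shows that $\{M_k\}$ is optimal.

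The substantive link is I $\Rightarrow$ III, and here I would run the perturbation of Barnett and Croke. Assume $\{M_k\}$ optimal, fix an index $j$ and a unit vector $|\phi\rangle$, set $P=|\phi\rangle\langle\phi|$, and for $0<\varepsilon\leq1$ define $M_i'=M_i^{1/2}(\openone-\varepsilon P)M_i^{1/2}$ for $i\neq j$ and $M_j'=M_j+\varepsilon\sum_{i\neq j}M_i^{1/2}PM_i^{1/2}$. This is again a POVM: $\sum_iM_i'=\openone$ and each summand is positive because $\openone-\varepsilon P\geq0$. The sandwiched form is exactly what preserves positivity even where the $M_i$ are singular, which is where a naive first-order perturbation fails. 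Cyclicity of the trace then gives, exactly,
\[
P_{\text{succ}}(\{M_i'\})-P_{\text{succ}}(\{M_i\})=-\varepsilon\,\langle\phi|\hat{G}_j|\phi\rangle ,\qquad \hat{G}_j:=\sum_i M_i^{1/2}(\rho_i-\rho_j)M_i^{1/2} ,
\]
the $i=j$ summand vanishing, so optimality forces $\hat{G}_j\geq0$ for every $j$, which is the positivity criterion of Holevo's remark.

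The main obstacle is the algebraic upgrade from $\hat{G}_j\geq0$ for all $j$ to condition III: one must produce a self-adjoint $G\geq\rho_k$ with $(G-\rho_k)M_k=0$, equivalently show $L=L^{\dag}$, $L\geq\rho_k$, and $(L-\rho_k)M_k=0$ with $G=L$. This is precisely the extra step in Barnett and Croke's proof that Holevo's remark renders optional, and it is where I expect the real difficulty; I would reproduce their manipulation, or, as an alternative route avoiding $\hat{G}_j$ altogether, derive III directly from convex-programming duality for the linear functional $P_{\text{succ}}$ over the convex set of POVMs. Granting this, the cycle III $\Rightarrow$ II $\Rightarrow$ I $\Rightarrow$ III is complete, so I, II, III are equivalent.

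It remains to show that $L$ is the unique self-adjoint operator of minimal trace with $L\geq\rho_k$ for all $k$. By the basic estimate, $\operatorname{Tr}X\geq P_{\text{succ}}(\{M_k\})=\operatorname{Tr}L$ for every self-adjoint $X\geq\rho_k$, so $L$ attains the minimum; suppose $X$ does too, that is $\operatorname{Tr}X=\operatorname{Tr}L$. Then $\sum_k\operatorname{Tr}\big(M_k(X-\rho_k)\big)=\operatorname{Tr}X-\operatorname{Tr}L=0$ with every summand nonnegative, so each $\operatorname{Tr}\big(M_k(X-\rho_k)\big)=\operatorname{Tr}\big((M_k^{1/2}(X-\rho_k)^{1/2})^{\dag}M_k^{1/2}(X-\rho_k)^{1/2}\big)$ vanishes; hence $M_k^{1/2}(X-\rho_k)^{1/2}=0$, so $(X-\rho_k)M_k=M_k(X-\rho_k)=0$, so $XM_k=\rho_kM_k$, and summing over $k$ gives $X=\sum_k\rho_kM_k=L^{\dag}=L$. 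Thus the minimizer is unique. The only care needed in infinite dimensions is that the operators in question be trace class, which holds because $\operatorname{Tr}L=\sum_k\operatorname{Tr}(M_k\rho_k)\leq\sum_kp_k=1$.
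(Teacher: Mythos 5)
Your links III~$\Rightarrow$~II, II~$\Rightarrow$~I, and the minimal-trace/uniqueness argument are correct, and they are the routine parts of the theorem: summing the slackness relations to get $G=L^{\dagger}=L$, and the observation that any self-adjoint $X\geq\rho_k$ bounds every POVM's success probability by $\operatorname{Tr}X$, are exactly the standard sufficiency and minimality arguments. (Note that the paper itself offers no proof of this theorem; it is quoted with attributions to Holevo and Yuen--Kennedy--Lax, with Barnett--Croke cited for a recent operator-theoretic proof of necessity, so there is no in-paper argument to compare against.)

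The genuine gap is the one you flag yourself: I~$\Rightarrow$~III. Your sandwiched perturbation is valid and does yield, exactly and not merely to first order, that optimality forces $\hat{G}_j=\sum_i M_i^{1/2}(\rho_i-\rho_j)M_i^{1/2}\geq0$ for every $j$. But that is Holevo's \emph{other} characterization of optimality, not condition III: knowing that $\sum_i M_i^{1/2}\rho_j M_i^{1/2}$ is dominated by $\sum_i M_i^{1/2}\rho_i M_i^{1/2}$ for all $j$ does not by itself produce a self-adjoint $G\geq\rho_k$ with $(G-\rho_k)M_k=0$, equivalently $L=L^{\dagger}$, $L\geq\rho_k$, $(L-\rho_k)M_k=0$. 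Deferring this step to ``reproduce Barnett and Croke's manipulation'' or to ``convex-programming duality'' leaves the heart of the theorem unproved: the entire content of the necessity direction is precisely this upgrade, and in the generality claimed here (infinite-dimensional $\mathcal{H}$, where the result is credited to Holevo because earlier proofs fail there) the duality route additionally requires an attainment argument for the dual minimizer, which is the historically nontrivial point. As written, your chain is III~$\Rightarrow$~II~$\Rightarrow$~I~$\Rightarrow$~($\hat{G}_j\geq0$ for all $j$), which does not close the cycle; to have a proof you must either carry out the passage from $\hat{G}_j\geq0$ (or directly from optimality) to II/III in full, or execute the duality argument including attainment.
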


The above optimality conditions were first proved in the infinite-dimensional
case by Holevo, since earlier proofs worked only in finite dimensions. The
inequalities in properties II-III use the standard order on self-adjoint
matrices:\ $A\geq B$ iff $A-B$ is positive semidefinite. The LHS of condition
II is commonly referred to as the \textbf{real part}:
\begin{equation}
\operatorname{Re}\left(  L\right)  :=\left(  L+L^{\dag}\right)
/2.\label{def real part}%
\end{equation}

\section{Mathematical background}

\begin{definition}
\label{def of positive part}Let $A$ be a self-adjoint operator on a Hilbert
space $\mathcal{H}$ with spectral decomposition $A=%
%TCIMACRO{\dsum }%
%BeginExpansion
{\displaystyle\sum}
%EndExpansion
\lambda_{k}\Pi_{k} $. The \textbf{positive part} of $A$ is given by
\begin{equation}
\left[  A\right]  _{+}=%
%TCIMACRO{\dsum _{\lambda_{k}>0}}%
%BeginExpansion
{\displaystyle\sum_{\lambda_{k}>0}}
%EndExpansion
\lambda_{k}\Pi_{k}\text{.\label{def of positive pt}}%
\end{equation}
The \textbf{positive projection} is given by%
\begin{equation}
\chi_{+}\left(  A\right)  =%
%TCIMACRO{\dsum _{\lambda_{k}>0}}%
%BeginExpansion
{\displaystyle\sum_{\lambda_{k}>0}}
%EndExpansion
\Pi_{k}\text{.\label{def of positive projection}}%
\end{equation}
The \textbf{trace norm }of an operator $B:\mathcal{H}\rightarrow\mathcal{H}$
is given by $\left\Vert B\right\Vert _{1}=\operatorname*{Tr}\sqrt{B^{\dag}B}$.
The \textbf{operator norm} is given by $\left\Vert B\right\Vert _{\infty}%
=\sup_{\left\Vert \psi\right\Vert =1}\left\Vert B\psi\right\Vert $.
\end{definition}

We collect some simple mathematical facts. We will frequently use the
inequalities
\begin{align}
\left\vert \operatorname*{Tr}A\right\vert  & \leq\left\Vert A\right\Vert
_{1}\label{inequality trace less than i1 norm}\\
\left\Vert BC\right\Vert _{1}  & \leq\left\Vert B\right\Vert _{1}\left\Vert
C\right\Vert _{\infty},\label{inequality l1 linfinity holder}%
\end{align}
which may be found in \cite{Reed and Simon I}. For positive semidefinite
operators $P_{1},P_{2}\geq0$ such that $P_{1}P_{2}$ is trace class, one has%
\begin{equation}
\operatorname*{Tr}P_{1}P_{2}\geq0,\label{fact yuen trace of P1P2}%
\end{equation}
with equality iff $P_{1}P_{2}=0$ \cite{Yuen Ken Lax Optimum testing of
multiple} and%
\begin{equation}
A_{1}\geq A_{2}\Rightarrow C^{\dag}A_{1}C\geq C^{\dag}A_{2}%
C\label{fact conjugate a positve operator}%
\end{equation}
for all operators $C$ and self-adjoint $A_{1},A_{2}$.

\section{\label{section robust}Estimates of near- and non-optimality}

Our next goal is to strengthen condition II\ of Theorem
\ref{theorem simplified optimality condition} by giving quantitative bounds in
the case that condition II fails to hold. As a first step, note that in the
finite-dimensional case if%
\begin{equation}
\operatorname{Re}\left(  L\right)  \geq\rho_{k}-\alpha
\end{equation}
for some scalar $\alpha>0$, then by inequality $\left(
\ref{fact yuen trace of P1P2}\right)  $%
\begin{align}
P_{\text{succ}}\left(  M_{k}\right)    & =\operatorname*{Tr}\operatorname{Re}%
\left(  L\right)  =\operatorname*{Tr}%
%TCIMACRO{\dsum _{k}}%
%BeginExpansion
{\displaystyle\sum_{k}}
%EndExpansion
\operatorname{Re}\left(  L\right)  M_{k}^{\text{opt}}\\
& \geq\operatorname*{Tr}%
%TCIMACRO{\dsum _{k}}%
%BeginExpansion
{\displaystyle\sum_{k}}
%EndExpansion
\left(  \rho_{k}-\alpha\right)  M_{k}^{\text{opt}}=P_{\text{succ}}\left(
M_{k}^{\text{opt}}\right)  -\alpha\dim\mathcal{H}\text{,}%
\end{align}
where $M_{k}^{\text{opt}}$ is some optimal POVM.

In order to control dimensional factors (and to consider ensembles on
infinite-dimensional Hilbert spaces) it is useful to introduce the following concept:

\begin{definition}
Let $\mathcal{E}=\left\{  \rho_{k}\right\}  $ be the ensemble of definition
\ref{main BC definition}, and let $p\in\left[  0,1\right]  $. The $\mathbf{p}%
$\textbf{-dimension }$\dim_{p}\left(  \mathcal{E}\right)  $ is the minimum
dimension of a subspace $\Lambda$ for which
\begin{equation}%
%TCIMACRO{\dsum _{k}}%
%BeginExpansion
{\displaystyle\sum_{k}}
%EndExpansion
\left\Vert \left(  1-\Pi_{\Lambda}\right)  \rho_{k}\right\Vert _{1}\leq
p\text{,\label{condition wish could have in terms of sum of rhok}}%
\end{equation}
where $\Pi_{\Lambda}$ is the orthogonal projection onto $\Lambda$.
\end{definition}

\noindent\textbf{Remark:} Note that the inequality $\left(
\ref{condition wish could have in terms of sum of rhok}\right)  $ implies that%
\[
\operatorname*{Tr}\left(  1-\Pi_{\Lambda}\right)
%TCIMACRO{\dsum }%
%BeginExpansion
{\displaystyle\sum}
%EndExpansion
\rho_{k}\leq\left\Vert
%TCIMACRO{\dsum }%
%BeginExpansion
{\displaystyle\sum}
%EndExpansion
\left(  1-\Pi_{\Lambda}\right)  \rho_{k}\right\Vert _{1}\leq%
%TCIMACRO{\dsum }%
%BeginExpansion
{\displaystyle\sum}
%EndExpansion
\left\Vert \left(  1-\Pi_{\Lambda}\right)  \rho_{k}\right\Vert _{1}\leq
p\text{.}%
\]

\begin{lemma}
\label{Lemma pdim well defined finite}For fixed $\mathcal{E}$, the function
$p\mapsto\dim_{p}\left(  \mathcal{E}\right)  $ is finite on $\left(
0,1\right]  $ and monotonically-decreasing on $\left[  0,1\right]  $.
\end{lemma}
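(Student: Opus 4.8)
The plan is to verify the two assertions separately. For monotonicity, the key observation is that the defining inequality $\sum_k \lVert (1-\Pi_\Lambda)\rho_k\rVert_1 \leq p$ becomes \emph{easier} to satisfy as $p$ increases: if $p_1 \leq p_2$, then any subspace $\Lambda$ witnessing $\dim_{p_1}(\mathcal{E})$ also satisfies the inequality for $p_2$, so $\dim_{p_2}(\mathcal{E}) \leq \dim_{p_1}(\mathcal{E})$. This is essentially immediate once one notes the feasible set of subspaces for $p_2$ contains that for $p_1$; I would state it in one or two sentences. (The value at $p=0$ requires $(1-\Pi_\Lambda)\rho_k = 0$ for all $k$, i.e.\ $\Lambda \supseteq \operatorname{supp}(\mathcal{E})$, which may be infinite-dimensional — hence the claim of finiteness is only on $(0,1]$.)

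For finiteness on $(0,1]$, fix $p \in (0,1]$; I must exhibit \emph{some} finite-dimensional subspace $\Lambda$ with $\sum_k \lVert (1-\Pi_\Lambda)\rho_k\rVert_1 \leq p$. The idea is to approximate the total state $\rho := \sum_k \rho_k$ (a trace-class positive operator with $\operatorname{Tr}\rho = 1$) by a finite-rank truncation. Write the spectral decomposition $\rho = \sum_j \mu_j |e_j\rangle\langle e_j|$ with $\mu_j \geq 0$ and $\sum_j \mu_j = 1$; since the series converges, choose $N$ with $\sum_{j>N}\mu_j \leq p$ and let $\Lambda = \operatorname{span}\{e_1,\dots,e_N\}$, so that $\operatorname{Tr}(1-\Pi_\Lambda)\rho = \sum_{j>N}\mu_j \leq p$. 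The remaining step is to bound $\sum_k \lVert(1-\Pi_\Lambda)\rho_k\rVert_1$ by $\operatorname{Tr}(1-\Pi_\Lambda)\rho$. This is where a little care is needed: for each $k$, since $0 \leq \rho_k \leq \rho$, conjugating by $Q := 1-\Pi_\Lambda$ and using $\left(\ref{fact conjugate a positve operator}\right)$ gives $0 \leq Q\rho_k Q \leq Q\rho Q$, hence $\operatorname{Tr} Q\rho_k Q \leq \operatorname{Tr} Q\rho Q$. But $\lVert Q\rho_k Q\rVert_1 = \operatorname{Tr} Q\rho_k Q$ since $Q\rho_k Q \geq 0$, and summing over $k$ yields $\sum_k \operatorname{Tr} Q\rho_k Q \leq \operatorname{Tr} Q\rho Q = \operatorname{Tr}(1-\Pi_\Lambda)\rho \leq p$.

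The one genuine subtlety — the main obstacle — is that the definition of $\dim_p$ involves $\lVert(1-\Pi_\Lambda)\rho_k\rVert_1$, i.e.\ $\lVert Q\rho_k\rVert_1$, not the symmetrized quantity $\lVert Q\rho_k Q\rVert_1$, so the bound of the previous paragraph does not directly apply. To bridge this, I would use that $\lVert Q\rho_k\rVert_1 = \lVert Q\sqrt{\rho_k}\cdot\sqrt{\rho_k}\rVert_1 \leq \lVert Q\sqrt{\rho_k}\rVert_2 \cdot \lVert\sqrt{\rho_k}\rVert_2$ by Cauchy–Schwarz for the Hilbert–Schmidt inner product, and $\lVert Q\sqrt{\rho_k}\rVert_2^2 = \operatorname{Tr} Q\rho_k Q$, $\lVert\sqrt{\rho_k}\rVert_2^2 = \operatorname{Tr}\rho_k = p_k$; thus $\lVert Q\rho_k\rVert_1 \leq \sqrt{p_k}\sqrt{\operatorname{Tr} Q\rho_k Q} \leq \sqrt{p_k}\sqrt{\operatorname{Tr} Q\rho Q}$. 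Summing and applying Cauchy–Schwarz once more in $k$ gives $\sum_k\lVert Q\rho_k\rVert_1 \leq \bigl(\sum_k \sqrt{p_k}\bigr)\sqrt{\operatorname{Tr} Q\rho Q}$. Since $\sum_k p_k = 1$, the factor $\sum_k\sqrt{p_k}$ can be infinite for a countable ensemble; to handle this cleanly I would instead split $K$ into a finite set $K_0$ carrying all but $p/2$ of the probability and a tail, absorb the tail directly (its $\rho_k$ have total trace $\leq p/2$, so $\sum_{k\notin K_0}\lVert Q\rho_k\rVert_1 \leq \sum_{k\notin K_0}\operatorname{Tr}\rho_k \leq p/2$), and apply the Cauchy–Schwarz estimate only over the finite set $K_0$ with target $p/2$, choosing $N$ large enough that $\sqrt{\operatorname{Tr} Q\rho Q} \leq (p/2)/\sum_{k\in K_0}\sqrt{p_k}$. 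This produces a finite $\Lambda$ with $\sum_k\lVert(1-\Pi_\Lambda)\rho_k\rVert_1 \leq p$, establishing $\dim_p(\mathcal{E}) < \infty$. $\rule{0.5em}{0.5em}$
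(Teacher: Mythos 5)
Your proof is correct, and the monotonicity half is identical to the paper's (a one-line observation about the feasible set). The finiteness half, however, takes a genuinely different route. The paper works with spectral decompositions of the \emph{individual} states, $\rho_{k}=\sum_{\ell}\lambda_{k\ell}\left\vert \psi_{k\ell}\right\rangle \left\langle \psi_{k\ell}\right\vert$, takes $\Lambda$ to be the span of a finite subset $S$ of the eigenvectors $\psi_{k\ell}$ capturing all but $p$ of the total eigenvalue mass $\sum_{k\ell}\lambda_{k\ell}=1$, and then applies only the triangle inequality: each rank-one term with $(k,\ell)\in S$ is annihilated exactly by $1-\Pi_{S}$, and each remaining term contributes at most $\lambda_{k\ell}$ to the trace norm. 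This sidesteps entirely the asymmetry between $\left\Vert Q\rho_{k}\right\Vert _{1}$ and $\operatorname*{Tr}Q\rho_{k}Q$ that you correctly identify as the main obstacle in your approach. You instead truncate the spectral decomposition of the \emph{total} state $\rho=\sum_{k}\rho_{k}$, which forces you to pass from the symmetrized quantity $\operatorname*{Tr}Q\rho Q$ back to $\left\Vert Q\rho_{k}\right\Vert _{1}$ via H\"{o}lder/Cauchy--Schwarz, picking up factors $\sqrt{p_{k}}$ whose sum may diverge for a countable ensemble, which in turn necessitates your head--tail splitting of $K$. All of these steps are valid (the tail is controlled by $\left\Vert Q\rho_{k}\right\Vert _{1}\leq p_{k}$, and the finite head by choosing $N$ after $K_{0}$), so the argument goes through; what your version buys is a choice of $\Lambda$ that is canonical (eigenspaces of the average state, independent of how each $\rho_{k}$ is decomposed), at the cost of roughly twice the length and an extra $\varepsilon/2$-juggling step that the paper's rank-one decomposition renders unnecessary. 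Your parenthetical about why $p=0$ is excluded from the finiteness claim is a correct observation that the paper leaves implicit.
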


\begin{proof}
The monotonicity of $p\mapsto\dim_{p}\left(  \mathcal{E}\right)  $ is
immediate from the definition. To prove finiteness for $p>0$, take spectral
decompositions $\rho_{k}=%
%TCIMACRO{\dsum }%
%BeginExpansion
{\displaystyle\sum}
%EndExpansion
\lambda_{k\ell}\left\vert \psi_{k\ell}\right\rangle \left\langle \psi_{k\ell
}\right\vert $. For a finite subset $S$ of the $\left(  k,\ell\right)  $, let
$\Pi_{S}$ be the projection onto the linear span of the $\psi_{k\ell}$ with
$\left(  k,\ell\right)  \in S$. By the triangle inequality%
\[%
%TCIMACRO{\dsum _{k}}%
%BeginExpansion
{\displaystyle\sum_{k}}
%EndExpansion
\left\Vert \left(  1-\Pi_{S}\right)  \rho_{k}\right\Vert _{1}\leq%
%TCIMACRO{\dsum _{k\ell}}%
%BeginExpansion
{\displaystyle\sum_{k\ell}}
%EndExpansion
\left\Vert \left(  1-\Pi_{S}\right)  \lambda_{k\ell}\left\vert \psi_{k\ell
}\right\rangle \left\langle \psi_{k\ell}\right\vert \right\Vert _{1}\leq%
%TCIMACRO{\dsum _{\left(  k,\ell\right)  \notin S}}%
%BeginExpansion
{\displaystyle\sum_{\left(  k,\ell\right)  \notin S}}
%EndExpansion
\lambda_{k\ell}%
\]
Since $%
%TCIMACRO{\dsum _{\left(  k,\ell\right)  \in S}}%
%BeginExpansion
{\displaystyle\sum_{\left(  k,\ell\right)  \in S}}
%EndExpansion
\lambda_{k\ell}=1$, we may take a finite subset $S$ of the $\left(
k,\ell\right)  $ such that the right-hand side may be made smaller than $p$.
\end{proof}

We may now state a robust version of Theorem
\ref{theorem simplified optimality condition}:

\begin{theorem}
\label{theorem robust version of simplified optimality}Let $\left\{
M_{k}\right\}  $ be a POVM\ for distinguishing $\mathcal{E}$, let $L=%
%TCIMACRO{\dsum }%
%BeginExpansion
{\displaystyle\sum}
%EndExpansion
M_{k}\rho_{k}$, and let $\{M_{k}^{\text{opt}}\}$ be an optimal measurement. Then

\begin{enumerate}
\item Assume that $\alpha>0$ is a scalar such that
\begin{equation}
\operatorname{Re}\left(  L\right)  \geq\rho_{k}-\alpha
\label{almost II star with aa error}%
\end{equation}
for all $k.$ Then for $p\in\left[  0,1/4\right)  $%
\begin{equation}
P_{\text{succ}}\left(  M_{k}\right)  \geq P_{\text{succ}}\left(
M_{k}^{\text{opt}}\right)  -\alpha\dim_{p}\left(  \mathcal{E}\right)
-4p\text{.}\label{inequality approx optimality}%
\end{equation}

\item Suppose that $\operatorname{Re}\left(  L\right)  \ngeq\rho_{\ell}$ for
some $\ell$. Then%
\begin{equation}
P_{\text{succ}}\left(  M_{k}\right)  \leq P_{\text{succ}}\left(
M_{k}^{\text{opt}}\right)  -\left(  \operatorname*{Tr}\left(  \left[
\rho_{\ell}-\operatorname{Re}\left(  L\right)  \right]  _{+}\right)  \right)
^{2},\label{upper bound on Psucc using
BC ideas}%
\end{equation}
where $\left[  \bullet\right]  _{+}$ is the positive part, defined in
definition \ref{def of positive part}.
\end{enumerate}
\end{theorem}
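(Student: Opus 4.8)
The plan is to handle the two assertions separately; the first is essentially a projection‑and‑trace‑inequality argument, while the second is a Barnett--Croke–style perturbation.

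For assertion~1 the key device is to conjugate everything by the orthogonal projection $\Pi$ onto a subspace $\Lambda$ realizing $\dim_{p}(\mathcal{E})$, so that $\operatorname{Tr}\Pi=\dim_{p}(\mathcal{E})$ and $\sum_{k}\Vert(1-\Pi)\rho_{k}\Vert_{1}\le p$. First I would write $P_{\text{succ}}(M_{k}^{\text{opt}})=\sum_{k}\operatorname{Tr}(M_{k}^{\text{opt}}\rho_{k})$ and replace each $\rho_{k}$ by $\Pi\rho_{k}\Pi$; since $0\le M_{k}^{\text{opt}}\le\openone$ and $\rho_{k}-\Pi\rho_{k}\Pi=(1-\Pi)\rho_{k}+\Pi\rho_{k}(1-\Pi)$, inequalities $\left(\ref{inequality trace less than i1 norm}\right)$--$\left(\ref{inequality l1 linfinity holder}\right)$ together with $\Vert A\Vert_{1}=\Vert A^{\dag}\Vert_{1}$ bound the resulting error by $\sum_{k}\Vert\rho_{k}-\Pi\rho_{k}\Pi\Vert_{1}\le 2\sum_{k}\Vert(1-\Pi)\rho_{k}\Vert_{1}\le 2p$. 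Next, conjugating the hypothesis $\operatorname{Re}(L)+\alpha\ge\rho_{k}$ by $\Pi$ via $\left(\ref{fact conjugate a positve operator}\right)$ gives $\Pi\operatorname{Re}(L)\Pi+\alpha\Pi\ge\Pi\rho_{k}\Pi$, so by $\left(\ref{fact yuen trace of P1P2}\right)$ one has $\operatorname{Tr}(M_{k}^{\text{opt}}\Pi\rho_{k}\Pi)\le\operatorname{Tr}(M_{k}^{\text{opt}}\Pi\operatorname{Re}(L)\Pi)+\alpha\operatorname{Tr}(M_{k}^{\text{opt}}\Pi)$; summing on $k$ and using $\sum_{k}M_{k}^{\text{opt}}=\openone$ and $\Pi^{2}=\Pi$ collapses the right‑hand side to $\operatorname{Tr}(\Pi\operatorname{Re}(L))+\alpha\dim_{p}(\mathcal{E})$. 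Finally $\operatorname{Tr}(\Pi\operatorname{Re}(L))=\operatorname{Tr}\operatorname{Re}(L)-\operatorname{Tr}((1-\Pi)\operatorname{Re}(L))$, and expanding $\operatorname{Re}(L)=\tfrac12\sum_{k}(M_{k}\rho_{k}+\rho_{k}M_{k})$ bounds $|\operatorname{Tr}((1-\Pi)\operatorname{Re}(L))|$ by $\sum_{k}\Vert(1-\Pi)\rho_{k}\Vert_{1}\le p$; since $\operatorname{Tr}\operatorname{Re}(L)=P_{\text{succ}}(M_{k})$, collecting terms gives $P_{\text{succ}}(M_{k}^{\text{opt}})\le P_{\text{succ}}(M_{k})+\alpha\dim_{p}(\mathcal{E})+3p$, which in particular implies $\left(\ref{inequality approx optimality}\right)$. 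The only care needed is that $\Pi$ is finite‑rank, so all the traces of products with $\Pi$ are finite and the interchange of $\sum_{k}$ with $\operatorname{Tr}$ is legitimate; the restriction $p<1/4$ is only there to keep the estimate non‑vacuous.

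For assertion~2 the plan is to build an explicit competitor to $\{M_{k}\}$. Put $P=\chi_{+}(\rho_{\ell}-\operatorname{Re}(L))$, $D=[\rho_{\ell}-\operatorname{Re}(L)]_{+}$, $d=\operatorname{Tr}D>0$, and for $\varepsilon\in[0,1]$ set $c=1-\sqrt{1-\varepsilon}$, so that $(1-\varepsilon P)^{1/2}=1-cP$ and $\varepsilon=2c-c^{2}$. Define $M_{k}^{(\varepsilon)}=(1-cP)M_{k}(1-cP)$ for $k\ne\ell$ and $M_{\ell}^{(\varepsilon)}=(1-cP)M_{\ell}(1-cP)+\varepsilon P$; then $\sum_{k}M_{k}^{(\varepsilon)}=(1-\varepsilon P)+\varepsilon P=\openone$ and positivity are immediate. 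Expanding $(1-cP)\rho_{k}(1-cP)$, summing on $k$ with $\sum_{k}M_{k}\rho_{k}=L$ and $\sum_{k}\rho_{k}M_{k}=L^{\dag}$, and using $\operatorname{Tr}(P\rho_{\ell})=\operatorname{Tr}(P\operatorname{Re}(L))+d$ together with the algebraic identity $\varepsilon-2c=-c^{2}$, I would arrive at the exact formula
\[
P_{\text{succ}}(M_{k}^{(\varepsilon)})-P_{\text{succ}}(M_{k})=2cd+c^{2}B,\qquad B=\Sigma-\operatorname{Tr}(P\rho_{\ell}),\quad \Sigma=\sum_{k}\operatorname{Tr}(PM_{k}P\rho_{k}).
\]
By $\left(\ref{fact yuen trace of P1P2}\right)$ one has $\Sigma\ge0$, hence $B\ge-\operatorname{Tr}(P\rho_{\ell})\ge-p_{\ell}\ge-1$.

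It then remains to optimize over $\varepsilon$, and here the one genuinely new point is a bootstrap that forces $d\le1$: taking $\varepsilon=1$ (so $c=1$) in the formula above and using $P_{\text{succ}}(M_{k}^{(1)})\le1$ and $P_{\text{succ}}(M_{k})\ge0$ gives $2d+B\le1$, whence $2d\le1-B\le1+p_{\ell}\le2$. Consequently the choice $c=d$ is admissible — it corresponds to $\varepsilon=2d-d^{2}\in[0,1]$ — and the formula then reads $P_{\text{succ}}(M_{k}^{(\varepsilon)})-P_{\text{succ}}(M_{k})=d^{2}(2+B)\ge d^{2}$, again by $B\ge-1$. Since $P_{\text{succ}}(M_{k}^{\text{opt}})\ge P_{\text{succ}}(M_{k}^{(\varepsilon)})$, this is precisely $\left(\ref{upper bound on Psucc using BC ideas}\right)$. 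I expect the main obstacle to be deriving the exact perturbative formula with correct bookkeeping — in particular recognizing that conjugating by $(1-\varepsilon P)^{1/2}$, rather than taking a naive linear perturbation, is what both preserves $\sum_{k}M_{k}=\openone$ and produces a clean quadratic in $c$ — and then noticing that the very same family at $\varepsilon=1$ supplies the a priori bound $\operatorname{Tr}D\le1$ without which the choice $c=\operatorname{Tr}D$ would be illegal. Infinite‑dimensional issues are minor: $L$, $L^{\dag}$, $\rho_{\ell}$, $\operatorname{Re}(L)$ and $D$ are all trace class, $P$ is a bounded projection, and every series over $k$ converges in trace norm, so each manipulation above is justified.
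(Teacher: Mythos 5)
Your proposal is correct and takes essentially the same route as the paper: part 1 is the same argument of conjugating by the projection onto a subspace realizing $\dim_{p}(\mathcal{E})$ and controlling the off-block terms in trace norm (your bookkeeping even yields $3p$ in place of $4p$), and part 2 constructs exactly the paper's Barnett--Croke modification $M_{k}(cP,\ell)$ with $P=\chi_{+}\left(\rho_{\ell}-\operatorname{Re}\left(L\right)\right)$. Your bootstrap at $c=1$ giving $\operatorname{Tr}\left(\left[\rho_{\ell}-\operatorname{Re}\left(L\right)\right]_{+}\right)\leq 1$ and the subsequent choice $c=\operatorname{Tr}\left(\left[\rho_{\ell}-\operatorname{Re}\left(L\right)\right]_{+}\right)$ coincide with the paper's contradiction argument and choice of $\alpha$.
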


\subsection{Discussion of Theorem
\ref{theorem robust version of simplified optimality}}

The small-$\alpha$ case of Part 1 addresses the case where $\left\{
M_{k}\right\}  $ nearly-satisfies condition II. In particular, $\left(
\ref{inequality approx optimality}\right)  $ implies that $P_{\text{succ}%
}\left(  M_{k}\right)  \geq P_{\text{succ}}\left(  M_{k}^{\text{opt}}\right)
-\varepsilon$ if
\begin{equation}
\alpha<\sup_{p\in\left[  0,\varepsilon/4\right]  }\frac{\varepsilon-4p}%
{\dim_{p}\left(  \mathcal{E}\right)  }%
\text{.\label{condition to make Pfail nearly optimal}}%
\end{equation}
The following example shows that the dependence of this expression on
$\mathcal{E}$ may not be removed except (in the finite-dimensional case) by
introducing dimensional factors:

\begin{example}
Let $m$ be a positive integer, and let $\mathcal{E}$ be the $m$-state ensemble
on $\mathbb{C}^{m}$ defined by $\rho_{k}=\left\vert k\right\rangle
\left\langle k\right\vert /m.$ Set $M_{k}=\left\vert k+1\right\rangle
\left\langle k+1\right\vert $, using addition mod $m$. Then one has
$P_{\text{succ}}\left(  M_{k}\right)  =0$ and $P_{\text{succ}}\left(
M_{k}^{\text{opt}}\right)  =1$, but inequality $\left(
\ref{almost II star with aa error}\right)  $ holds for $\alpha=1/m,$ which
approaches $0$ as $m\rightarrow\infty$.
\end{example}

\subsection{Proof of part 1 of Theorem
\ref{theorem robust version of simplified optimality}}

\begin{proof}
Let $\Pi$ be an orthogonal projection, and set $\Pi^{\bot}=%
%TCIMACRO{\TeXButton{1}{\openone}}%
%BeginExpansion
\openone
%EndExpansion
-\Pi$. Then%
\begin{equation}
P_{\text{succ}}\left(  M_{k}\right)  =\operatorname*{Tr}\left(  \Pi
\operatorname{Re}\left(  L\right)  \Pi\right)  +\operatorname*{Tr}\left(
\Pi^{\bot}\operatorname{Re}\left(  L\right)  \Pi^{\bot}\right)
\text{.\label{eq break with projection}}%
\end{equation}
Using equations $\left(  \ref{inequality trace less than i1 norm}\right)
$-$\left(  \ref{fact conjugate a positve operator}\right)  $ to estimate the
first term,%
\begin{align}
\operatorname*{Tr}\Pi\operatorname{Re}\left(  L\right)  \Pi &
=\operatorname*{Tr}%
%TCIMACRO{\dsum _{k}}%
%BeginExpansion
{\displaystyle\sum_{k}}
%EndExpansion
\operatorname{Re}\left(  L\right)  \times\Pi M_{k}^{\text{opt}}\Pi\nonumber\\
& \geq\operatorname*{Tr}%
%TCIMACRO{\dsum }%
%BeginExpansion
{\displaystyle\sum}
%EndExpansion
\left(  \rho_{k}-\alpha\right)  \times\Pi M_{k}^{\text{opt}}\Pi\nonumber\\
& =P_{\text{succ}}\left(  M_{k}^{\text{opt}}\right)  -\alpha\operatorname*{Tr}%
\left(  \Pi\right)  +\operatorname*{Tr}\left(
%TCIMACRO{\dsum }%
%BeginExpansion
{\displaystyle\sum}
%EndExpansion
\left(  \Pi\rho_{k}\Pi-\rho_{k}\right)  M_{k}^{\text{opt}}\right) \nonumber\\
& \geq P_{\text{succ}}\left(  M_{k}^{\text{opt}}\right)  -\alpha
\operatorname*{Tr}\left(  \Pi\right)  -%
%TCIMACRO{\dsum }%
%BeginExpansion
{\displaystyle\sum}
%EndExpansion
\left\Vert \rho_{k}-\Pi\rho_{k}\Pi\right\Vert _{1}\text{.}%
\label{first term estimate in dim free conv estimate}%
\end{align}
But%
\begin{align}%
%TCIMACRO{\dsum _{k}}%
%BeginExpansion
{\displaystyle\sum_{k}}
%EndExpansion
\left\Vert \rho_{k}-\Pi\rho_{k}\Pi\right\Vert _{1}  & =%
%TCIMACRO{\dsum _{k}}%
%BeginExpansion
{\displaystyle\sum_{k}}
%EndExpansion
\left\Vert \Pi^{\bot}\rho_{k}+\rho_{k}\Pi^{\bot}+\Pi^{\bot}\rho_{k}\Pi^{\bot
}\right\Vert _{1}\nonumber\\
& \leq3%
%TCIMACRO{\dsum _{k}}%
%BeginExpansion
{\displaystyle\sum_{k}}
%EndExpansion
\left\Vert \Pi^{\bot}\rho_{k}\right\Vert _{1}%
\text{.\label{I1 estimate in dim free conv lemma}}%
\end{align}
Using $\left(  \ref{inequality l1 linfinity holder}\right)  $ to estimate the
second term of $\left(  \ref{eq break with projection}\right)  $,%
\begin{align}
\left\vert \operatorname*{Tr}\left(  \Pi^{\bot}\operatorname{Re}\left(
L\right)  \Pi^{\bot}\right)  \right\vert  & \leq\frac{1}{2}\left\Vert
%TCIMACRO{\dsum }%
%BeginExpansion
{\displaystyle\sum}
%EndExpansion
\Pi^{\bot}\rho_{k}M_{k}\Pi^{\bot}+\Pi^{\bot}M_{k}\rho_{k}\Pi^{\bot}\right\Vert
_{1}\nonumber\\
& \leq%
%TCIMACRO{\dsum }%
%BeginExpansion
{\displaystyle\sum}
%EndExpansion
\left\Vert \Pi^{\bot}\rho_{k}\right\Vert _{1}\label{second
term estimate in dim free conv est}%
\end{align}
Putting $\left(  \ref{eq break with projection}\right)  -\left(
\ref{second term estimate in dim free conv est}\right)  $ together gives%
\begin{equation}
P_{\text{succ}}\left(  M_{k}\right)  \geq P_{\text{succ}}\left(
M_{k}^{\text{opt}}\right)  -\alpha\operatorname*{Tr}\left(  \Pi\right)  -4%
%TCIMACRO{\dsum }%
%BeginExpansion
{\displaystyle\sum}
%EndExpansion
\left\Vert \Pi^{\bot}\rho_{k}\right\Vert _{1}%
\text{.\label{bound on Psucc from below before putting in dimension}}%
\end{equation}
The bound $\left(  \ref{inequality approx optimality}\right)  $ follows by
picking $\Pi$ to minimize $\operatorname*{Tr}\left(  \Pi\right)  $ when the
last term of $\left(
\ref{bound on Psucc from below before putting in dimension}\right)  $ is
constrained to be less than $p$. (By Lemma
\ref{Lemma pdim well defined finite} such $\Pi$ of finite rank always exist.)
\end{proof}

\subsection{Proof of part 2 of Theorem
\ref{theorem robust version of simplified optimality}}

\begin{definition}
Let $\left\{  M_{k}\right\}  $ be a POVM for distinguishing the ensemble
$\mathcal{E}$ of definition \ref{main BC definition}, let $X\leq2\times%
%TCIMACRO{\TeXButton{1}{\openone}}%
%BeginExpansion
\openone
%EndExpansion
$ be a positive semidefinite operator on $\mathcal{H}$, and let $\ell\in K$.
Then the \textbf{Barnett -Croke} \textbf{modification }of $\left\{
M_{k}\right\}  $ is defined by%
\[
M_{k}\left(  X,\ell\right)  =\left(  1-X\right)  M_{k}\left(  1-X\right)
+\delta_{k\ell}\left(  2X-X^{2}\right)  \text{.}%
\]

\end{definition}

\noindent\textbf{Remark:} Note that since $0\leq2X-X^{2}$ for $0\leq
X\leq2\times%
%TCIMACRO{\TeXButton{1}{\openone}}%
%BeginExpansion
\openone
%EndExpansion
$, for each $\ell$ the set $\left\{  M_{k}\left(  X,\ell\right)  \right\}  $
forms a POVM. Barnett and Croke \cite{Barnett Croke on the conditions for
discrimination between quantum states with minimum error} considered the case
$X=\varepsilon\left\vert \psi\right\rangle \left\langle \psi\right\vert $,
where $\psi$ is a unit vector satisfying the eigenvalue equation
\[
\left(  \rho_{\ell}-\operatorname{Re}\left(  L\right)  \right)  \left\vert
\psi\right\rangle =-\lambda\left\vert \psi\right\rangle \text{,}%
\]
with $\lambda>0$. They showed that%
\[
\left.  \frac{d}{d\varepsilon}\right\vert _{\varepsilon=0}P_{\text{succ}%
}\left(  M_{k}\left(  X,\ell\right)  \right)  =2\lambda>0\text{.}%
\]
In order to complete the proof of part 2 of Theorem
\ref{theorem robust version of simplified optimality}, it suffices to turn
this perturbative argument into an estimate.

\bigskip

\begin{proof}
[Proof of part 2 of Theorem
\ref{theorem robust version of simplified optimality}]Let $\Pi_{+}$ be the
positive projection $\left(  \ref{def of positive projection}\right)  $%
\begin{equation}
\Pi_{+}=\chi_{+}\left(  \rho_{\ell}-\operatorname{Re}\left(  L\right)
\right)  \text{.\label{define pi plus}}%
\end{equation}
Then for $\alpha\in\left[  0,2\right]  $,%
\begin{align}
P_{\text{succ}}\left(  M_{k}\left(  \alpha\Pi_{+},\ell\right)  \right)    &
=P_{\text{succ}}\left(  M_{k}\right)  +2\alpha\operatorname*{Tr}\left[
\left(  \rho_{\ell}-\operatorname{Re}\left(  L\right)  \right)  \times\Pi
_{+}\right]  \nonumber\\
& -\alpha^{2}\operatorname*{Tr}\left(  \Pi_{+}\rho_{\ell}\right)  +\alpha
^{2}\operatorname*{Tr}%
%TCIMACRO{\dsum }%
%BeginExpansion
{\displaystyle\sum}
%EndExpansion
\Pi_{+}M_{k}\Pi_{+}\rho_{k}\\
& \geq P_{\text{succ}}\left(  M_{k}\right)  +2\alpha\operatorname*{Tr}\left(
\left[  \rho_{\ell}-\operatorname{Re}\left(  L\right)  \right]  _{+}\right)
-\alpha^{2},\label{nice bound generalized BC}%
\end{align}
where we have used cyclicity of the trace and$\left(
\ref{fact yuen trace of P1P2}\right)  -\left(
\ref{fact conjugate a positve operator}\right)  $.

Note that if $\operatorname*{Tr}\left(  \left[  \rho_{\ell}-\operatorname{Re}%
\left(  L\right)  \right]  _{+}\right)  >1$ then%
\[
P_{\text{succ}}\left(  M_{k}\left(  \Pi_{+},\ell\right)  \right)
=P_{\text{succ}}\left(  M_{k}\right)  +2\operatorname*{Tr}\left(  \left[
\rho_{\ell}-\operatorname{Re}\left(  L\right)  \right]  _{+}\right)  -1>1,
\]
giving a contradiction. In particular, we may set
\begin{equation}
\alpha=\operatorname*{Tr}\left(  \left[  \rho_{\ell}-\operatorname{Re}\left(
L\right)  \right]  _{+}\right)  \in\left[  0,1\right]  ,\label{define ee}%
\end{equation}
maximizing the RHS of $\left(  \ref{nice bound generalized BC}\right)  $ over
$\alpha\in\left[  0,1\right]  $. This gives%
\begin{equation}
P_{\text{succ}}\left(  M_{k}\left(  \alpha\Pi_{+},\ell\right)  \right)  \geq
P_{\text{succ}}\left(  M_{k}\right)  +\left(  \operatorname*{Tr}\left(
\left[  \rho_{\ell}-\operatorname{Re}\left(  L\right)  \right]  _{+}\right)
\right)  ^{2}.\label{bound on iterate that has square in it}%
\end{equation}

\end{proof}

\section{Barnett-Croke iteration\label{section algorithm}}

In this section we show how to convert Barnett and Croke's perturbative proof
into an algorithm for computing optimal measurements. Although the success
rate of poorly-chosen iterations might fail to actually converge to that of an
optimal measurement,\footnote{In is asserted in \cite{Jezek Rehacek and
Fiurasek Finding optimal strategies for minimum error quantum state
discrimination} that the algorithm of \cite{Helstrom Bayes cost reduction}
suffers this fate.} the following sequence does not exhibit this malady:

\begin{definition}
\label{definition iterative scheme}Let $\left\{  M_{k}\right\}  $ be a POVM
for distinguishing the ensemble $\mathcal{E}$ of definition
\ref{main BC definition}, and chose $\ell$ to maximize%
\begin{equation}
\alpha=\operatorname*{Tr}\left[  \rho_{\ell}-\operatorname{Re}\left(
L\right)  \right]  _{+}\text{.\label{first maximum}}%
\end{equation}
Then the \textbf{iterate }of $\left\{  M_{k}\right\}  $ is the POVM%
\begin{equation}
M_{k}^{+}=M_{k}\left(  \alpha\chi_{+}\left(  \rho_{\ell}-\operatorname{Re}%
\left(  L\right)  \right)  ,\ell\right)
\text{,\label{equation defining Mkplus}}%
\end{equation}
where $\left[  \bullet\right]  _{+}$ and $\chi_{+}$ are defined in $\left(
\ref{def of positive pt}\right)  -\left(  \ref{def of positive projection}%
\right)  $. For a given measurement $\{M_{k}^{\left(  0\right)  }\},$
recursively define the \textbf{iterative series }$\{M_{k}^{\left(  n\right)
}\}_{n\geq1}$ by\footnote{Faster convergence can be obtained by replacing
$\alpha$ by $\beta$ in equation $\left(  \ref{equation defining Mkplus}%
\right)  ,$ where $\beta\in\left[  0,2\right]  $ is chosen to maximize
$P_{\text{succ}}\left(  M_{k}^{+}\right)  $, which is quadratic in $\beta$.}
\begin{equation}
M_{k}^{\left(  n+1\right)  }=\left(  M_{k}^{\left(  n\right)  }\right)
^{+}\text{.\label{eq defining iterative sequence}}%
\end{equation}

\end{definition}

\noindent\textbf{Remark:} An index $\ell$ maximizing $\left(
\ref{first maximum}\right)  $ exists using minimax principle (Theorem XIII.1
of \cite{Reed and Simon IV}) and the fact that $\operatorname*{Tr}%
%TCIMACRO{\dsum }%
%BeginExpansion
{\displaystyle\sum}
%EndExpansion
\rho_{\ell}=1$.

The proof of part II of Theorem
\ref{theorem robust version of simplified optimality} actually proved the
following stronger result:

\begin{theorem}
\label{theorem amount iterate increases}The above iteration monotonically
increases success rate. In particular, for an arbitrary POVM $\left\{
M_{k}\right\}  $ the set $\left\{  M_{k}^{+}\right\}  $ is a well-defined
POVM, and%
\begin{equation}
P_{\text{succ}}\left(  M_{k}^{+}\right)  \geq P_{\text{succ}}\left(
M_{k}\right)  +\max_{\ell}\left(  \operatorname*{Tr}\left(  \left[  \rho
_{\ell}-\operatorname{Re}\left(  L\right)  \right]  _{+}\right)  \right)
^{2}\text{.}\label{inequality iteration pretty good}%
\end{equation}

\end{theorem}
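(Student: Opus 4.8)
The plan is to recognize that Theorem \ref{theorem amount iterate increases} is, up to bookkeeping, already contained in the proof of part 2 of Theorem \ref{theorem robust version of simplified optimality}. The one thing to verify is that the chain $(\ref{nice bound generalized BC})$--$(\ref{bound on iterate that has square in it})$ never used the standing hypothesis $\operatorname{Re}(L)\ngeq\rho_\ell$ of part 2: that hypothesis only served to guarantee $\operatorname{Tr}[\rho_\ell-\operatorname{Re}(L)]_+>0$, i.e.\ strict improvement, whereas the inequality $P_{\text{succ}}(M_k(\alpha\Pi_+,\ell))\geq P_{\text{succ}}(M_k)+2\alpha\operatorname{Tr}[\rho_\ell-\operatorname{Re}(L)]_+-\alpha^2$ (with $\Pi_+=\chi_+(\rho_\ell-\operatorname{Re}(L))$ as in $(\ref{define pi plus})$) holds for every index $\ell$ and every $\alpha\in[0,2]$. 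So first I would re-run that computation verbatim for an arbitrary $\ell$, set $\alpha_\ell=\operatorname{Tr}[\rho_\ell-\operatorname{Re}(L)]_+$, treat the degenerate case $\alpha_\ell=0$ separately (there $M_k(0,\ell)=M_k$ and the claimed bound is a trivial equality), and in the remaining case optimize the quadratic $2\alpha\alpha_\ell-\alpha^2$ over $\alpha\in[0,1]$ exactly as in $(\ref{define ee})$--$(\ref{bound on iterate that has square in it})$ to get $P_{\text{succ}}(M_k(\alpha_\ell\Pi_+,\ell))\geq P_{\text{succ}}(M_k)+\alpha_\ell^2$.

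Next I would settle the well-definedness assertion. Existence of an index $\ell$ attaining $\max_\ell\operatorname{Tr}[\rho_\ell-\operatorname{Re}(L)]_+$ is supplied by the Remark following Definition \ref{definition iterative scheme} (minimax principle plus $\operatorname{Tr}\sum\rho_\ell=1$), and this maximal value is the $\alpha$ of $(\ref{first maximum})$. The bound $\alpha_\ell\leq1$, which is what makes $\alpha_\ell\chi_+(\rho_\ell-\operatorname{Re}(L))$ lie in the range $0\leq X\leq2\times\openone$ on which the Barnett--Croke modification is defined, is exactly the contradiction argument already in the text: feeding $\alpha=1$ into the above inequality and using $P_{\text{succ}}\leq1$ for the POVM $\{M_k(\Pi_+,\ell)\}$ rules out $\alpha_\ell>1$. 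Given $\alpha_\ell\in[0,1]\subseteq[0,2]$, the Remark following the definition of the Barnett--Croke modification then certifies that $\{M_k^+\}=\{M_k(\alpha_\ell\chi_+(\rho_\ell-\operatorname{Re}(L)),\ell)\}$ is a genuine POVM for every starting POVM $\{M_k\}$.

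Finally I would assemble the pieces: applying the per-index inequality of the first paragraph at an $\ell$ maximizing $\alpha_\ell$ gives $P_{\text{succ}}(M_k^+)\geq P_{\text{succ}}(M_k)+(\max_\ell\operatorname{Tr}[\rho_\ell-\operatorname{Re}(L)]_+)^2$, which is $(\ref{inequality iteration pretty good})$, and in particular $P_{\text{succ}}(M_k^+)\geq P_{\text{succ}}(M_k)$, so the iteration monotonically increases success rate. I do not expect a real obstacle; the only care needed is the bookkeeping flagged above --- confirming that the part-2 computation used nothing beyond $\{M_k(\alpha\Pi_+,\ell)\}$ being a POVM and the elementary facts $(\ref{fact yuen trace of P1P2})$--$(\ref{fact conjugate a positve operator})$, and that the edge case $\alpha_\ell=0$ (which is precisely the case $\operatorname{Re}(L)\geq\rho_\ell$ excluded from part 2) is harmless because then the iterate coincides with the original POVM.
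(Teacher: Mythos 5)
Your proposal is correct and is essentially the paper's own argument: the paper proves this theorem simply by observing that the proof of part 2 of Theorem \ref{theorem robust version of simplified optimality} (the chain from $(\ref{nice bound generalized BC})$ through $(\ref{bound on iterate that has square in it})$, including the contradiction argument bounding $\operatorname{Tr}[\rho_\ell-\operatorname{Re}(L)]_+$ by $1$) already establishes it, and your bookkeeping — running the bound for arbitrary $\ell$, noting the $\alpha_\ell=0$ case is trivial, and invoking the remarks for existence of a maximizing $\ell$ and POVM validity of the modification — is exactly the content of that observation.
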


We now show that the iterative scheme of definition
$\ref{definition iterative scheme}$ approaches optimality:

\begin{theorem}
Let $M_{k}^{\left(  0\right)  }$ be an arbitrary starting POVM for the
iterative series $\left(  \ref{eq defining iterative sequence}\right)  $.
Then
\begin{equation}
\lim_{n\rightarrow\infty}P_{\text{succ}}(M_{k}^{\left(  n\right)
})=P_{\text{succ}}(M_{k}^{\text{opt}})\text{,}%
\end{equation}
where $M_{k}^{\text{opt}}$ is an optimal measurement.
\end{theorem}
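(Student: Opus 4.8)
The plan is to combine the monotonic improvement bound of Theorem~\ref{theorem amount iterate increases} with the approximate-optimality estimate of part~1 of Theorem~\ref{theorem robust version of simplified optimality}. Write $s_n = P_{\text{succ}}(M_k^{(n)})$ and $s_{\text{opt}} = P_{\text{succ}}(M_k^{\text{opt}})$. Since the sequence $s_n$ is nondecreasing (Theorem~\ref{theorem amount iterate increases}) and bounded above by $s_{\text{opt}} \leq 1$, it converges to some limit $s_\infty \leq s_{\text{opt}}$. From inequality~$\left(\ref{inequality iteration pretty good}\right)$ we get
\begin{equation}
\max_\ell\left(\operatorname*{Tr}\left(\left[\rho_\ell - \operatorname{Re}\left(L^{(n)}\right)\right]_+\right)\right)^2 \leq s_{n+1} - s_n \to 0,
\end{equation}
where $L^{(n)} = \sum_k M_k^{(n)}\rho_k$. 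Hence, setting $\alpha_n := \max_\ell \operatorname*{Tr}\left(\left[\rho_\ell - \operatorname{Re}(L^{(n)})\right]_+\right)$, we have $\alpha_n \to 0$ as $n \to \infty$.

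Next I would convert the smallness of $\alpha_n$ into a near-satisfaction of condition~II in the sense of hypothesis~$\left(\ref{almost II star with aa error}\right)$. For each $\ell$, the operator $\rho_\ell - \operatorname{Re}(L^{(n)})$ has positive part with trace $\leq \alpha_n$, so in particular its largest eigenvalue is at most $\alpha_n$; thus $\rho_\ell - \operatorname{Re}(L^{(n)}) \leq \alpha_n$, i.e. $\operatorname{Re}(L^{(n)}) \geq \rho_\ell - \alpha_n$ for all $\ell$. Now apply part~1 of Theorem~\ref{theorem robust version of simplified optimality} with this $\alpha = \alpha_n$: for any $p \in [0, 1/4)$,
\begin{equation}
s_n \geq s_{\text{opt}} - \alpha_n \dim_p(\mathcal{E}) - 4p.
\end{equation}
Fix $\varepsilon > 0$. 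Choose $p = \varepsilon/8 < 1/4$; then $\dim_p(\mathcal{E})$ is a finite constant by Lemma~\ref{Lemma pdim well defined finite}. Since $\alpha_n \to 0$, for $n$ large enough $\alpha_n \dim_p(\mathcal{E}) < \varepsilon/2$, giving $s_n \geq s_{\text{opt}} - \varepsilon$. As $\varepsilon$ was arbitrary, $s_\infty \geq s_{\text{opt}}$, and combined with $s_\infty \leq s_{\text{opt}}$ this forces $s_\infty = s_{\text{opt}}$, proving the claim.

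The one point requiring a little care — and the main obstacle — is the passage from ``the telescoping sum $\sum_n (s_{n+1}-s_n)$ converges, hence its terms tend to zero'' to a uniform-in-$\ell$ statement: we need $\alpha_n$, defined as a maximum over $\ell \in K$, to vanish, and when $K$ is countably infinite one should confirm the maximum is attained (this is exactly the minimax/compactness remark following Definition~\ref{definition iterative scheme}) so that $\alpha_n$ is genuinely the quantity appearing in $\left(\ref{inequality iteration pretty good}\right)$. Everything else is a direct chaining of the two previously established theorems together with the elementary observation that a positive semidefinite operator of trace at most $\alpha_n$ is bounded above by $\alpha_n\cdot\openone$. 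Note also that the argument does not produce a convergent \emph{sequence of POVMs}, only convergence of the success probabilities, which is all that is claimed.
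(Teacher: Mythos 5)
Your proposal is correct and follows essentially the same route as the paper: both use the monotone improvement bound of Theorem \ref{theorem amount iterate increases} to force $\max_{\ell}\operatorname*{Tr}\left(\left[\rho_{\ell}-\operatorname{Re}(L^{(n)})\right]_{+}\right)$ to become small, convert this via $A\leq\operatorname*{Tr}\left(\left[A\right]_{+}\right)\cdot\openone$ into the hypothesis of part 1 of Theorem \ref{theorem robust version of simplified optimality}, and conclude. The only (immaterial) difference is that you argue softly that the telescoping increments tend to zero, whereas the paper extracts an explicit iteration count $N=\Delta^{-2}$ by contradiction.
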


\begin{proof}
Let $\varepsilon>0$ be arbitrary. We seek an $N>0$ such that%
\begin{equation}
n>N\Rightarrow P_{\text{succ}}\left(  M_{k}^{\left(  n\right)  }\right)  \geq
P_{\text{succ}}\left(  M_{k}^{\text{opt}}\right)  -\varepsilon\text{.}%
\label{claim BC iteration converges}%
\end{equation}
Set%
\[
L^{\left(  n\right)  }=%
%TCIMACRO{\dsum }%
%BeginExpansion
{\displaystyle\sum}
%EndExpansion
M_{k}^{\left(  n\right)  }\rho_{k}.
\]
By equation $\left(  \ref{condition to make Pfail nearly optimal}\right)  $
and the monotonicity of $n\mapsto P_{\text{succ}}\left(  M_{k}^{\left(
n\right)  }\right)  $, it suffices to find a $n\leq N$ such that%
\begin{equation}
\operatorname{Re}\left(  L^{\left(  n\right)  }\right)  \geq\rho_{\ell}%
-\Delta\label{in BC iter prove this}%
\end{equation}
for all $\ell$, where $\Delta$ is any real number satisfying\footnote{In
finite dimensions, one may take $\Delta=\varepsilon/\dim\mathcal{H}%
\leq\varepsilon/\dim\left(  \operatorname{supp}\left(  \mathcal{E}\right)
\right)  $, corresponding to $p=0$.}
\begin{equation}
0<\Delta\leq\sup_{p\in\left[  0,\varepsilon/4\right]  }\frac{\varepsilon
-4p}{\dim_{p}\left(  \mathcal{E}\right)  }.
\end{equation}
We claim that $N=\Delta^{-2}$ suffices.

Assume that%
\[
\max_{\ell}\operatorname*{Tr}\left(  \left[  \rho_{\ell}-\operatorname{Re}%
\left(  L^{\left(  n\right)  }\right)  \right]  _{+}\right)  >\Delta
\]
for all $n\leq N$. By Theorem $\ref{theorem amount iterate increases},$%
\[
P_{\text{succ}}\left(  M_{k}^{\left\lceil N\right\rceil +1}\right)
>N\times\Delta^{2}\geq1,
\]
yielding a contraction.

It follows that%
\[
\max_{\ell}\operatorname*{Tr}\left(  \left[  \rho_{\ell}-\operatorname{Re}%
\left(  L^{\left(  n\right)  }\right)  \right]  _{+}\right)  \leq\Delta
\]
for some $n\leq N.$ The inequality $\left(  \ref{in BC iter prove this}%
\right)  $ follows from the observation that%
\[
A\leq\operatorname*{Tr}\left(  \left[  A\right]  _{+}\right)  \times%
%TCIMACRO{\TeXButton{1}{\openone}}%
%BeginExpansion
\openone
%EndExpansion
\text{,}%
\]
for $A=\rho_{\ell}-\operatorname{Re}\left(  L^{\left(  n\right)  }\right)  .$
\end{proof}

\section{Conclusion}

Using non-optimality estimates in terms of quantitative violation of Holevo's
simplified optimal measurement condition, we have converted Barnett and
Croke's perturbative proof into a conceptually-simple iterative scheme for
computing optimal measurements. This iteration approaches the optimal success
rate even in the case of infinite-dimensions and infinite ensemble
cardinality. It would be interesting to try to improve the non-optimality
bounds of Theorem \ref{theorem robust version of simplified optimality}, and
to study the convergence rate of this iteration in more detail.

\bigskip

\noindent\textbf{Acknowledgements:} The author would like to thank A. Holevo
for pointing out a useful reference, and Arthur Jaffe for his encouragement.

\bigskip

\end{document}